\documentclass[a4paper]{article}

\usepackage{amsmath}
\usepackage{amssymb}
\usepackage{amsthm}
\usepackage{amsfonts}

\usepackage{bbm}

\usepackage{url}

\usepackage{cleveref}

\usepackage{titling}
\DeclareMathOperator{\Aut}{Aut}


\DeclareMathOperator{\M}{M}

\newcommand{\comment}[1]{}

\newcommand{\card}[1]{|#1|}

\renewcommand{\o}{\omega}

\newcommand{\myset}[1]{\{1,\dots,#1\}}

\newcommand{\id}{\mathbbm{1}}

\newtheorem{proposition}{Proposition}

\newtheorem{lemma}{Lemma}

\theoremstyle{definition}
\newtheorem{definition}{Definition}
\newtheorem{example}{Example}

\usepackage[affil-it]{authblk}
\setlength{\droptitle}{-10em}
\title{MacWilliams Extension Theorem for MDS additive codes}
\author{Serhii Dyshko
	\thanks{Electronic address: \texttt{dyshko@univ-tln.fr}}}
\affil{Institut de math\'ematiques de Toulon, Universit\'e de Toulon, France}

\begin{document}
\maketitle
\begin{abstract}
	The MacWilliams Extension Theorem states that each linear isometry of a linear code extends to a monomial map. Unlike the linear codes, in general, additive codes do not have the extension property. In this paper, an analogue of the extension theorem for additive codes in the case of additive MDS codes is proved. More precisely, it is shown that for almost all additive MDS codes their additive isometries extend to isometries of the ambient space.
\end{abstract}

\section{Introduction}
The MacWilliams Extension Theorem does not have a general analogue neither for nonlinear codes nor for additive codes. Nevertheless, in \cite{aug1} and \cite{kov1} the authors observed some classes of nonlinear codes for which an analogue of the extension theorem for nonlinear codes holds. In \cite{d1} we proved that the extension theorem for additive codes holds for the codes with the length not greater than some boundary value. There we also proved that, in general, this result cannot be improved.

In this paper, our main objective is to study the extendibility of additive isometries of MDS (maximum distance separable) additive codes. It appears that for almost all MDS codes, except the case of codes of dimension 2, the extension theorem holds, see \Cref{thm-mds-not2}. In the exceptional case, when code dimension equals 2, we can improve the general result of \cite{d1} and increase the bound on the code length, see \Cref{thm-mds-when2}.

Additionally, we observed an extension theorem for additive isometries of linear codes that are not linear isometries. The results are formulated in \Cref{thm-mwet-linear-codes}.

\section{Preliminaries}
Let $L$ be a finite field and let $n$ be a positive integer. Consider a Hamming space $L^n$. The MacWilliams Extension Theorem gives a full description of linear isometries of codes in $L^n$. It states that each linear isometry of a linear code in $L^n$ extends to a monomial map. A map $f: L^n \rightarrow L^n$ is called \emph{monomial} if it acts by permutation of coordinates and multiplications of coordinates by nonzero scalars. Note that monomial maps describe all isometries of the full Hamming space.

A general analogue of the MacWilliams Extension theorem does not exist for nonlinear codes. There exists an isometry of a nonlinear code that does not extend to an isometry of the whole space (see \cite{aug1}).

In \cite{d1} we observed a generalization of the MacWilliams Extension Theorem for the class of additive codes.
A code in $L^n$ is called \emph{additive} if it is an additive subgroup of $L^n$.
An \emph{additive isometry} of an additive code $C$ is an isometry that is a group homomorphism. Evidently, a map $f$ is an additive isometry if and only if $f$ preserves the Hamming weight.

\begin{example}\label{ex-1}
	Consider an additive code $C = \{ (0, 0, 0), (1,1,0), (\omega,0,1), (\omega^2,1,1) \}$ in $\mathbb{F}_4^3$, where $\mathbb{F}_4 = \{ 0, 1, \omega, \omega^2\}$ and $\omega + 1 = \omega^2$.
	Define a map $f: C \rightarrow \mathbb{F}_4^3$ in the following way:
	$f\big((0,0,0)\big) = (0,0,0)$, $f\big((1,1,0)\big) = (0,\omega^2,\omega)$, $f\big((\omega,0,1)\big) = (1,0,1)$ and $f\big((\omega^2,1,1)\big) = (1,\omega^2,\omega^2)$. The map $f$ is additive and it preserves the Hamming weight. Therefore $f$ is an additive isometry of the additive code $C$ in $\mathbb{F}_4^3$. Both codes $C$ and $f(C)$ are not $\mathbb{F}_4$-linear.
\end{example}
Let $K$ be a subfield of $L$. Along with the additive codes we will speak about \emph{$K$-linear codes}, i.e. codes that are $K$-linear subspaces of $L^n$. The notions of additive and $K$-linear codes in $L^n$ are in some sense equivalent.
Any $K$-linear code is additive and, conversely, any additive code is $\mathbb{F}_p$-linear, where $p$ is the characteristic of $L$. If $K = L$, a $K$-linear code is linear.
Obviously, any $K$-linear isometry is additive and any additive isometry is $\mathbb{F}_p$-linear. 

\begin{definition}\label{def-general-monomial}
A map $f: L^n \rightarrow L^n$ is called \emph{$K$-monomial} if there exist a permutation $\pi \in S_n$ and automorphisms $g_1, \dots, g_n \in \Aut_K(L)$ such that for all $u \in L^n$,
\begin{equation*}
f(u) = f\big((u_1, u_2, \dots, u_n)\big) = \big(g_1 (u_{\pi(1)}), g_2(u_{\pi(2)}), \dots, g_n( u_{\pi(n)})\big)\;.
\end{equation*}
\end{definition}
It is an easy exercise to prove that a map $f: L^n \rightarrow L^n$ is $K$-monomial if and only if it is a $K$-linear isometry.

An extension theorem for $K$-linear code isometries does not hold in general. For any pair of fields $K \subset L$ there exists a $K$-linear code and there exists a $K$-linear isometry of this code that cannot be extended to a $K$-monomial map. The example observed in \cite{d1} follows.

\begin{example}\label{example-general-unextendible}
	Consider two $K$-linear codes $C_1 = \langle v_1, v_2 \rangle_K$ and $C_2 = \langle u_1, u_2 \rangle_K$ of the length $\card{K} + 1$ with
	\begin{equation*}
	\left(
	\begin{matrix}
	v_1 \\
	v_2
	\end{matrix}\right) =
	\left(
	\begin{matrix}
	0 & 1& 1& \dots & 1\\
	1 & x_1 & x_2 & \dots & x_{\card{K}}
	\end{matrix}\right)
	\xrightarrow{f}
	\left(
	\begin{matrix}
	0 & 1& 1& \dots & 1\\
	0 & \o & \o & \dots & \o
	\end{matrix}\right) =
	\left(
	\begin{matrix}
	u_1 \\
	u_2
	\end{matrix}\right) \;,
	\end{equation*}
	where $x_i \in K$ are all different and $\o \in L \setminus K$.
	Define a $K$-linear map $f: C_1 \rightarrow C_2$ on the generators of $C_1$ in the following way: $f(v_1) = u_1$ and $f(v_2) = u_2$.
	The map $f$ is an isometry.
	But, there is no $K$-monomial transformation that acts on $C_1$ in the same way as the map $f$. The first coordinate of all vectors in $C_2$ is always zero, but there is no such all-zero coordinate in $C_1$.
\end{example}

However, we are able to prove an extension theorem for $K$-linear codes of short length. In \cite{d1} we proved the following.

\begin{proposition}\label{thm-small-length-mwet}
	Let $K \subset L$ be a pair of finite fields and let $n \leq \card{K}$. Any $K$-linear isometry of a $K$-linear code in $L^n$ extends to a $K$-monomial map.
\end{proposition}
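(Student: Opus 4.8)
\emph{Sketch of the intended argument.} Let $f\colon C\to L^n$ be a $K$-linear isometry, where $C\subseteq L^n$ is $K$-linear and $n\le\card{K}$. The first step is to pass to coordinate functionals. For $i\in\myset{n}$ let $\lambda_i\in\Hom_K(C,L)$ send $c$ to its $i$-th coordinate $c_i$, and let $\mu_i\in\Hom_K(C,L)$ send $c$ to $f(c)_i$; both are $K$-linear because $f$ is. As $f$ preserves the Hamming weight, for every $c\in C$ one has $\#\{i:c\in\ker\lambda_i\}=n-\wt(c)=n-\wt(f(c))=\#\{i:c\in\ker\mu_i\}$. Hence the two multisets of $K$-subspaces of $C$
\[
\mathcal A=\{\ker\lambda_1,\dots,\ker\lambda_n\},\qquad \mathcal B=\{\ker\mu_1,\dots,\ker\mu_n\}
\]
have the same \emph{covering function} $v\mapsto\#\{U:v\in U\}$ and the same cardinality $n\le\card{K}$.

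The core of the proof is the following lemma, which I would prove separately: \emph{if $\mathcal A,\mathcal B$ are multisets of $K$-subspaces of a finite-dimensional $K$-vector space $V$, with $\card{\mathcal A}=\card{\mathcal B}\le\card{K}$ and the same covering function, then $\mathcal A=\mathcal B$.} Suppose not. Cancelling the common part, one obtains disjoint nonempty multisets $\mathcal A',\mathcal B'$ of equal cardinality $m\le\card{K}$ with the same covering function. Choose a subspace $W$ of largest dimension appearing in $\mathcal A'\cup\mathcal B'$; by symmetry assume $W$ appears in $\mathcal A'$, so by disjointness $W$ has multiplicity $0$ in $\mathcal B'$. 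The case $\dim W=0$ is trivial, so $\card{W}\ge\card{K}$. Averaging the covering identity over $v\in W$ gives
\[
\sum_{U\in\mathcal A'}\frac{\card{U\cap W}}{\card{W}}=\sum_{U\in\mathcal B'}\frac{\card{U\cap W}}{\card{W}},
\]
both sums taken with multiplicity. The term $U=W$ contributes at least $1$ to the left-hand side. On the right-hand side every $U$ satisfies $U\not\supseteq W$ (otherwise $\dim U\ge\dim W$ would force $U=W$, impossible since $W\notin\mathcal B'$), so $U\cap W$ is a proper subspace of $W$ and $\card{U\cap W}/\card{W}\le\card{K}^{-1}$; thus the right-hand side is at most $m/\card{K}\le 1$. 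Consequently both sides equal $1$, which forces $W$ to have multiplicity exactly $1$ in $\mathcal A'$ and every other subspace to have multiplicity $0$ there, i.e. $\mathcal A'=\{W\}$; then $\mathcal B'$ is a single subspace with the same indicator function as $W$, hence equal to $W$ — contradicting $W\notin\mathcal B'$.

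Applying the lemma to $V=C$ produces a permutation $\pi\in S_n$ with $\ker\lambda_{\pi(i)}=\ker\mu_i$ for all $i$. Now I reconstruct the monomial map: fixing $i$ and writing $N=\ker\lambda_{\pi(i)}=\ker\mu_i$, the functionals $\lambda_{\pi(i)}$ and $\mu_i$ induce injective $K$-linear maps $\overline\lambda,\overline\mu\colon C/N\hookrightarrow L$; since $\dim_K(C/N)=\dim_K\overline\lambda(C/N)$, these two subspaces of $L$ have equal dimension, so the isomorphism $\overline\mu\circ\overline\lambda^{-1}$ between them extends to some $g_i\in\Aut_K(L)$, and then $\mu_i=g_i\circ\lambda_{\pi(i)}$. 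The map $F\colon L^n\to L^n$ defined by $F(u)=\big(g_1(u_{\pi(1)}),\dots,g_n(u_{\pi(n)})\big)$ is $K$-monomial, and for $c\in C$ one has $F(c)_i=g_i\big(\lambda_{\pi(i)}(c)\big)=\mu_i(c)=f(c)_i$, so $F$ extends $f$.

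The translations at the beginning and the end are routine — the latter uses only that a $K$-linear isomorphism between equidimensional $K$-subspaces of $L$ extends to an element of $\Aut_K(L)$. The real obstacle is the combinatorial lemma, and it is exactly there that the hypothesis $n\le\card{K}$ enters and is sharp (\Cref{example-general-unextendible} realises a failure at length $\card{K}+1$): the idea is to average the covering identity over a subspace $W$ of maximal dimension, where the ``diagonal'' contribution $U=W$ equals a full unit while every contribution from the opposite multiset is damped by the factor $\card{U\cap W}/\card{W}\le\card{K}^{-1}$.
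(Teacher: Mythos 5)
Your proof is correct, and it takes a route that differs in presentation from the one the paper relies on. The paper defers to \cite{d1}, where the argument runs through the column-space formalism: one encodes $C$ and $f(C)$ by tuples of subspaces of $K^k$, characterises isometry by the weighted indicator identity of \cref{eq-main-counting-space} (\Cref{thm-isometry-criterium}), and shows that a nontrivial solution forces some $V_1$ to be covered by at least $\card{K}+1$ proper subspaces of the form $V_1\cap U_i$. You instead work directly in $C$ with the kernels of the coordinate functionals $\lambda_i,\mu_i$, observe that weight preservation says exactly that the two multisets of kernels have the same covering function, and prove a multiset-rigidity lemma by averaging that identity over a subspace $W$ of maximal dimension: the diagonal term contributes a full unit while each term from the opposite multiset is damped by $\card{U\cap W}/\card{W}\le\card{K}^{-1}$. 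This is essentially the dual picture of the paper's argument --- both hinge on the same two facts, maximality of the chosen subspace and the index bound $\ge\card{K}$ for proper $K$-subspaces, and both locate the sharpness at $n=\card{K}+1$ as in \Cref{example-general-unextendible}. What your version buys is self-containedness: you avoid \cref{eq-main-counting-space} entirely and you supply the reconstruction step (from $\ker\lambda_{\pi(i)}=\ker\mu_i$ to $g_i\in\Aut_K(L)$ via the induced injections $C/N\hookrightarrow L$) explicitly, which the paper leaves inside \Cref{thm-isometry-criterium}. What the paper's formalism buys is reusability --- the tuples $(\U,\V)$ and \cref{eq-main-counting-space} are the engine of all the later MDS results, which your kernel-based setup would not feed into as directly. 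One cosmetic remark: inside your lemma you reuse the letter $m$ for $\card{\mathcal A'}$, which collides with the paper's fixed $m=[L:K]$; rename it to avoid confusion.
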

\begin{proof}
	See \cite{d1}.
\end{proof}

According to \Cref{example-general-unextendible}, the result of \Cref{thm-small-length-mwet} cannot be improved in general. The aim of this paper is to improve this result for some classes of $K$-linear codes. Of particular interest are MDS additive codes. The description of the main technique that we use follows.

Denote the degree of the extension $[L:K] = m$. The finite field $L$ is a vector space over $K$. Fix a $K$-linear basis $b_1, \dots, b_m \in L$ of $L$ over $K$. For a positive integer $k$ and a vector-column $\vec{v} \in L^k$, let $\vec{v}_1, \dots, \vec{v}_m \in K^k$ be the expansion of $\vec{v}$ in the basis. This means that $\vec{v} = \sum_{i = 1}^m b_i \vec{v}_i$. Define a \emph{column space} $V \subseteq K^k$ of the vector $\vec{v}$ as the $K$-span of vectors, $V = \langle \vec{v}_1, \dots, \vec{v}_m \rangle_K$. Obviously, $0 \leq \dim_K V \leq m$.

\newcommand{\V}{\mathcal{V}}
\newcommand{\U}{\mathcal{U}}

Let $C$ be a $K$-linear code in $L^n$ and let $f: C \rightarrow L^n$ be a $K$-linear map.
Fix a $K$-linear basis $c_1, \dots, c_k \in L^n$ of $C$.
Let $A \in \M_{k \times n}(K)$ be a matrix with the rows $c_1,\dots, c_k$ and let $V_i \subseteq K^k$ denote the column space of the $i$th column of $A$, for $i \in \myset{n}$. 
Call $\V = (V_1, \dots, V_n)$ a \emph{tuple of spaces} of $C$. In \cite{d1} we proved an important formula for the dimension of a code, $\dim_K C = \dim_K \sum_{i=1}^n V_i$.

Let $f: C \rightarrow L^n$ be a $K$-linear map. Let $B \in \M_{k \times n}(K)$ be a matrix with $i$th row $f(c_i)$, $i \in \myset{k}$. Denote $\U = (U_1, \dots, U_n)$ the tuple of spaces of $f(C)$, where $U_i$ is the column space of the $i$th column of $B$, for $i \in \myset{n}$. Note that the $K$-linear span of the rows of $B$ equals to the code $f(C)$.

Call $(\U,\V)$ a \emph{pair of tuples} that corresponds to the code $C$ and the map $f$. We say that $\V$ and $\U$ are \emph{equivalent}, and denote $\U \sim \V$, if there exists a permutation $\pi \in S_n$, such that $V_i  = U_{\pi(i)}$, for all $i \in \myset{n}$.

Recall for the pair of sets $X \subseteq Y$ the indicator function $\id_X : Y \rightarrow \{0,1\}$ is defined as $\id_X(x) = 1$ for $x \in X$ and $\id_X(x) =0$ otherwise. In \cite{d1} we proved the following.
\begin{proposition}\label{thm-isometry-criterium}
Let $C \subseteq L^n$ be a $K$-linear code and let $f: C \rightarrow L^n$ be a $K$-linear isometry. Let $(\U,\V)$ be a pair of tuples that correspond to $C$ and $f$. The map $f$ is an isometry if and only if
\begin{equation}\label{eq-main-counting-space}
\sum_{i=1}^n \frac{1}{\card{V_i}} \id_{V_i}=
\sum_{i=1}^n \frac{1}{\card{U_i}} \id_{U_i}\;.
\end{equation}
The map $f$ extends to a $K$-monomial map if and only if $\U \sim \V$.
\end{proposition}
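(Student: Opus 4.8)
I would prove the two halves of the statement --- the weight identity \eqref{eq-main-counting-space} and the matching $\U\sim\V$ --- separately, after one common preliminary. Encode a codeword $x=\sum_{i=1}^{k}\lambda_i c_i\in C$ by its coordinate vector $\vec\lambda=(\lambda_1,\dots,\lambda_k)^{T}\in K^{k}$, write $\vec a^{(j)}\in L^{k}$ for the $j$-th column of $A$ with basis expansion $\vec a^{(j)}=\sum_{t=1}^{m}b_t\,\vec a^{(j)}_t$ (so that $V_j=\langle\vec a^{(j)}_1,\dots,\vec a^{(j)}_m\rangle_K$), and similarly $\vec b^{(j)}\in L^{k}$ for the $j$-th column of $B$ with $U_j=\langle\vec b^{(j)}_1,\dots,\vec b^{(j)}_m\rangle_K$. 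The one remark that drives everything is that the $j$-th coordinate of $x$ equals $\vec\lambda^{T}\vec a^{(j)}=\sum_{t=1}^{m}b_t\bigl(\vec\lambda^{T}\vec a^{(j)}_t\bigr)$; since each $\vec\lambda^{T}\vec a^{(j)}_t$ lies in $K$, uniqueness of the basis expansion shows that this coordinate vanishes if and only if $\vec\lambda$ is orthogonal, for the standard bilinear form on $K^{k}$, to every $\vec a^{(j)}_t$ --- that is, if and only if $\vec\lambda\in V_j^{\perp}$. Hence $\wt(x)=n-\sum_{j=1}^{n}\id_{V_j^{\perp}}(\vec\lambda)$, and since $f(x)=\sum_i\lambda_i f(c_i)$ is encoded by the very same $\vec\lambda$ against the rows of $B$, likewise $\wt(f(x))=n-\sum_{j=1}^{n}\id_{U_j^{\perp}}(\vec\lambda)$.

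For the first half, a $K$-linear map is an isometry exactly when it preserves the Hamming weight, so by the two formulas just obtained $f$ is an isometry if and only if $\sum_{j}\id_{V_j^{\perp}}=\sum_{j}\id_{U_j^{\perp}}$ as functions on $K^{k}$ (letting $\vec\lambda$ range over $K^{k}$). To match this with \eqref{eq-main-counting-space} I would pass to Fourier transforms: fix a nontrivial additive character $\psi$ of $K$ and put $\widehat{g}(\vec\lambda)=\sum_{\vec c\in K^{k}}g(\vec c)\,\psi(\vec\lambda^{T}\vec c)$ for complex-valued $g$ on $K^{k}$. The subgroup character sum gives $\widehat{\id_{W}}=\card{W}\,\id_{W^{\perp}}$ for every subspace $W\le K^{k}$, so $\sum_j\frac{1}{\card{V_j}}\id_{V_j}$ has Fourier transform $\sum_j\id_{V_j^{\perp}}$, and similarly for $\U$. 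Since $g\mapsto\widehat{g}$ is invertible, \eqref{eq-main-counting-space} holds if and only if $\sum_j\id_{V_j^{\perp}}=\sum_j\id_{U_j^{\perp}}$, i.e. if and only if $f$ is an isometry.

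For the second half the engine is the elementary lemma: two vectors $\vec v,\vec w\in L^{k}$ have the same column space if and only if there is $g\in\Aut_K(L)$ with $g(v_i)=w_i$ for all $i$ (equivalently, $g$ applied coordinatewise carries $\vec v$ to $\vec w$). The same computation as above identifies the space of $K$-linear relations $\{\vec\lambda\in K^{k}:\sum_i\lambda_i v_i=0\}$ with $V^{\perp}$, where $V$ is the column space of $\vec v$; so $\vec v$ and $\vec w$ have equal column spaces precisely when they obey the same $K$-linear relations, which is precisely when $v_i\mapsto w_i$ defines a $K$-linear isomorphism of the $K$-spans $\langle v_1,\dots,v_k\rangle_K$ and $\langle w_1,\dots,w_k\rangle_K$; these subspaces of $L$ then have equal dimension, so the isomorphism extends to an element of $\Aut_K(L)$, while conversely applying any $K$-linear bijection of $L$ coordinatewise preserves the relation set, hence the column space. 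Granting the lemma: if $f$ extends to a $K$-monomial map with permutation $\pi$ and automorphisms $g_j\in\Aut_K(L)$, then $(f(c_i))_j=g_j\bigl((c_i)_{\pi(j)}\bigr)$ for all $i,j$, i.e. $\vec b^{(j)}$ is $g_j$ applied coordinatewise to $\vec a^{(\pi(j))}$, whence $U_j=V_{\pi(j)}$ by the lemma and so $\U\sim\V$. Conversely, if $\U\sim\V$, then composing $f$ with the coordinate permutation realizing the equivalence --- itself a $K$-monomial map, hence harmless for the question of extendibility --- reduces us to the case $V_j=U_j$ for all $j$; the lemma then supplies $g_j\in\Aut_K(L)$ carrying $\vec a^{(j)}$ coordinatewise to $\vec b^{(j)}$, and $\widetilde{f}(u):=(g_1(u_1),\dots,g_n(u_n))$ is a $K$-monomial map that agrees with $f$ on the basis $c_1,\dots,c_k$, hence on all of $C$.

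The two weight formulas and the character-sum identity are routine; I expect the real content to sit in the lemma of the second half, and inside it in the assertion that $\Aut_K(L)$ is large enough to realize every $K$-linear isomorphism between equidimensional $K$-subspaces of $L$ --- this is exactly what promotes ``equal column spaces'' to ``related by a coordinate automorphism'', and it is where the reading of $\Aut_K(L)$ as the group of $K$-linear automorphisms of $L$ (as in \Cref{def-general-monomial}) is used. I would also note that the argument uses only $K$-linearity of $f$, not the standing hypothesis that $f$ is an isometry --- indeed the second half recovers that property from $\U\sim\V$ --- though I would keep the hypothesis for consistency with the statement.
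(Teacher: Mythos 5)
Your proof is correct and complete. The paper itself gives no argument for this proposition---it defers entirely to \cite{d1}---so there is nothing here to compare against line by line; but your route (reading off $\wt(x)=n-\sum_{j}\id_{V_j^{\perp}}(\vec{\lambda})$ from the column-space annihilators and then using Fourier inversion over $K^{k}$, via $\widehat{\id_{W}}=\card{W}\,\id_{W^{\perp}}$, to convert weight preservation into \cref{eq-main-counting-space}, plus the lemma that a column space determines its column vector up to a coordinatewise element of $\Aut_K(L)$) is exactly the duality argument such a criterion is built on, and your reading of $\Aut_K(L)$ as the group of all $K$-linear bijections of $L$ rather than the Galois group is the one forced by the paper's claim that $K$-monomial maps are precisely the $K$-linear isometries of $L^{n}$.
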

\begin{proof}
	See \cite{d1}.
\end{proof}

A solution $(\U,\V)$ of \cref{eq-main-counting-space} is called \emph{trivial} if $\U \sim \V$, and \emph{nontrivial} otherwise.
According to \Cref{thm-isometry-criterium}, a $K$-linear code isometry extends to a $K$-monomial map if and only if the corresponding solution $(\U,\V)$ is trivial. 

In the case of linear code isometries, when $K = L$, we can easily prove the MacWilliams Extension Theorem using \Cref{thm-isometry-criterium}. Indeed, by the construction, in the case $[L:K]=1$, the spaces that appear in \cref{eq-main-counting-space} are either lines or zero spaces. Hence, it is easy to see that a solution of \cref{eq-main-counting-space} can be only trivial.  

Let $K$ be a proper subfield of $L$. Previously, in \cite{d1}, we proved that there exists a nontrivial solution of \cref{eq-main-counting-space} if and only if $m \geq \card{K} +1$. \Cref{thm-small-length-mwet} immediately follows from this fact and \Cref{thm-isometry-criterium}.

In the following sections we use the following notation. Let $K \subset L$ be a pair of finite fields, let $n$ be a positive integer and let $C$ be a $K$-linear code in $L^n$. Denote $q = \card{K}$, $k = \dim_K C$ and $m = [L:K]$. Let $\V = (V_1,\dots,V_n)$ be a tuple of spaces of $C$. If there is considered a $K$-linear map $f: C \rightarrow L^n$, let $(\U,\V)$ be a pair of tuples that correspond to $C$ and $f$, where $\U = (U_1,\dots, U_n)$.

\section{Extendibility of additive isometries of MDS codes}
In coding theory there is a famous Singleton bound according to which the cardinality of a code $C$ in $L^n$ is not greater than $\card{L}^{n - d + 1}$, where $d$ is the minimum distance of the code. The code is called MDS if $\card{C} = \card{L}^{n - d + 1}$.

In this section we assume that $C$ is a $K$-linear MDS code of dimension $k$ over $K$. Since $q^k = \card{C} = \card{L}^{n- d + 1} = {(q^m)}^{n - d +1}$, obviously, $k = m (n - d +1)$. Denote $k_L = n - d +1$, so that $k = k_Lm$. Note that $k_L = \log_{\card{L}}\card{C}$ represents an analogue of the dimension of a code in linear case.

\begin{lemma}\label{lemma-mds-properties}
For each subset $I \subseteq \myset{n}$, $\dim_K \sum_{i \in I} V_i= m\min \{ k_L, \card{I} \}$.
\end{lemma}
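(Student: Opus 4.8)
The plan is to identify the left-hand side with the $K$-dimension of a projection of $C$ and then use the defining property of MDS codes. Write $p_I\colon L^n\to L^I$ for the projection onto the coordinates in $I$, and set $C_I=p_I(C)\subseteq L^I$, a $K$-linear code. First I would check that $\dim_K\sum_{i\in I}V_i=\dim_K C_I$: restricting the basis vectors $c_1,\dots,c_k$ of $C$ to the coordinates in $I$ gives a $K$-spanning set of $C_I$, and the tuple of column spaces attached to it is exactly $(V_i)_{i\in I}$ (the column indexed by $i\in I$ is unchanged by the restriction), so the dimension formula $\dim_K D=\dim_K\sum(\text{column spaces})$ of \cite{d1} applies — it is stated there for a basis, but the identical row-rank$=$column-rank argument covers an arbitrary spanning set, or one passes to a sub-basis by a harmless change of generators.

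Next I would record the information-set property of MDS codes in the needed form: for every $J\subseteq\myset{n}$ with $\card J=k_L$, the restriction $p_J|_C\colon C\to L^J$ is a bijection. For injectivity, if $c\in C$ and $p_J(c)=0$ then $\operatorname{supp}(c)\subseteq\myset{n}\setminus J$, a set of size $n-k_L=d-1$, so $\wt(c)\le d-1$ and hence $c=0$; bijectivity then follows by counting, since $\card C=q^k=q^{mk_L}=\card{L}^{k_L}=\card{L^J}$. (This uses only that $C$ is a subgroup meeting the Singleton bound.)

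Now the lemma splits into two cases. If $\card I\ge k_L$, choose $J\subseteq I$ with $\card J=k_L$; then $p_J|_C$ is injective and factors through $p_I|_C$, so $p_I|_C$ is injective and $\dim_K C_I=\dim_K C=k=mk_L=m\min\{k_L,\card I\}$. If $\card I\le k_L$, choose $J\supseteq I$ with $\card J=k_L$; then $p_I|_C$ is the composition of the surjection $p_J|_C\colon C\to L^J$ with the projection $L^J\to L^I$, hence is onto $L^I$, so $\dim_K C_I=\dim_K L^I=m\card I=m\min\{k_L,\card I\}$. Together with the first paragraph this proves the lemma; the case $I=\emptyset$ is trivial.

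I expect the only delicate point to be the first paragraph, namely making the reduction $\dim_K\sum_{i\in I}V_i=\dim_K p_I(C)$ fully rigorous in the language of tuples of spaces — the MDS hypothesis itself enters only through the elementary weight-and-counting argument of the second paragraph, which is routine.
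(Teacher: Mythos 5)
Your proof is correct, and its skeleton matches the paper's: both arguments reduce the lemma to the information-set property of MDS codes (projection onto any $k_L$ coordinates is bijective on $C$) and both use the dimension formula $\dim_K(\text{code}) = \dim_K\sum(\text{column spaces})$ to translate this into a statement about the $V_i$. The differences are worth noting. You prove the information-set property from scratch by the weight-and-counting argument, whereas the paper invokes the standard puncturing characterization of MDS codes from MacWilliams--Sloane; both are fine. More substantively, for $\card{I} < k_L$ the paper never returns to the punctured code: it argues by contradiction, combining $\dim_K \sum_{i \in J} V_i \leq \dim_K \sum_{i \in I} V_i + \dim_K \sum_{i \in J\setminus I} V_i$ with the bound $\dim_K V_i \leq m$ for a superset $J \supseteq I$ of size $k_L$. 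Your route via surjectivity of $p_I|_C$ is more direct and yields the slightly stronger fact that $p_I(C) = L^I$ for $\card{I} \leq k_L$. The point you flag as delicate --- that $\dim_K\sum_{i\in I}V_i = \dim_K p_I(C)$ when the restricted generators are only a spanning set --- is real, and your primary justification (row rank equals column rank of the expanded $k\times m\card{I}$ matrix over $K$) settles it; the ``pass to a sub-basis'' alternative is the weaker option, since changing the generating set changes the ambient space $K^k$ in which the column spaces sit and so requires an extra word. Note that the paper sidesteps this issue entirely: it applies the dimension formula only when $\card{I}=k_L$, where the restricted generators are automatically a basis by injectivity of the puncturing.
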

\begin{proof}
It is a well-known fact that a code with minimal distance $d$ is MDS if and only if deleting any $d - 1$ column we get a new code of the same cardinality (see \cite[p.~319]{macwilliams}). Let $I \subseteq \myset{n}$ be a set with $k_L$ elements. Let $c_1, \dots, c_k \in L^n$ be a $K$-linear basis of $C$ that correspond to the tuple of spaces $\V$ of the code. Consider the $K$-linear basis $c_1', \dots, c_k'$ of a new code $C'$, where each basis vector $c_i'$ is formed from $c_i$ by puncturing the coordinates with indexes $\myset{n} \setminus I$, for $i \in \myset{k}$. The tuple of the spaces $\V' = (V_1', \dots, V_{k_L}')$ that corresponds to the basis $c_1', \dots, c_k'$ contains only spaces from $\V$ with indexes from $I$. Hence, using the formula for the dimension of a code, $\dim_K \sum_{i \in I} V_i = \dim_K \sum_{i = 1}^{k_L} V_i' = \dim_K C = k_Lm$. Moreover, since for all $i \in \myset{n}$, $\dim_K V_i \leq m$ and $k_Lm = \dim_K \sum_{i \in I} V_i \leq \sum_{i \in I} \dim_K V_i = \card{I} m = k_Lm$, we have $\dim_K V_i = m$, for all $i \in \myset{n}$. Evidently, if $\card{I} > k_L$, then $\dim_K\sum_{i \in I} V_i = k_Lm$.

Let $\card{I} < k_L$ and let $J\subseteq \myset{n}$ be a subset, such that $I \subset J$ and $\card{J} = k_L$. Assume that $\dim_K \sum_{i \in I} V_i < m \card{I}$.
Then $\sum_{i \in J} V_i = \sum_{i \in I} V_i + \sum_{i \in J \setminus I} V_i$ and $m k_L = \dim_K \sum_{i \in J} V_i \leq \dim_K \sum_{i \in I} V_i + \dim_K \sum_{i \in J \setminus I} V_i < m \card{I} + m (\card{J} - \card{I}) = m \card{J} = m k_L$. By contradiction, $\dim_K \sum_{i \in I} V_i \geq m \card{I}$. Since $\dim_K \sum_{i \in I} V_i \leq m \card{I}$, we get the statement of the proposition.
\end{proof}

\Cref{lemma-mds-properties} particularly states that $V_i \cap V_j = \{0\}$ for all $i \neq j$.

\begin{proposition}\label{thm-mds-not2}
Let $K \subset L$ be a pair of finite fields. Let $C$ be a $K$-linear MDS code in $L^n$ with $k_L \neq 2$. Any $K$-linear isometry of $C$ extends to a $K$-monomial map.
\end{proposition}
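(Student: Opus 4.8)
The plan is to apply the isometry criterion of \Cref{thm-isometry-criterium}: it suffices to prove that the pair of tuples $(\U,\V)$ corresponding to $C$ and $f$ is trivial, i.e. $\U \sim \V$. The first step is a reduction. Since $f$ is a $K$-linear isometry it preserves the Hamming weight, hence it is injective, and $f(C)$ is a group code with the same cardinality and the same minimum distance $d$ as $C$; therefore $f(C)$ is again a $K$-linear MDS code with the same parameters, and $f(c_1),\dots,f(c_k)$ is a $K$-basis of it. Consequently \Cref{lemma-mds-properties} applies to $\U$ just as to $\V$: every $V_i$ and every $U_i$ has $K$-dimension $m$, so $\card{V_i}=\card{U_i}=q^m$ and \cref{eq-main-counting-space} collapses to $\sum_{i=1}^n \id_{V_i}=\sum_{i=1}^n \id_{U_i}$.

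If $k_L=1$ the statement is immediate, since then $k=m$ and each of the $m$-dimensional spaces $V_i$ and $U_i$ must equal the whole space $K^k$, so $\V=\U$. Assume from now on $k_L\ge 3$. By \Cref{lemma-mds-properties} the $V_i$ have pairwise trivial intersection, and likewise the $U_i$; hence every nonzero vector of $K^k$ lies in at most one $V_i$ and in at most one $U_i$. Evaluating the collapsed counting identity at a fixed $x\neq 0$ therefore shows that $x\in\bigcup_i V_i$ iff $x\in\bigcup_i U_i$, i.e. the sets $P:=\bigcup_{i=1}^n V_i$ and $Q:=\bigcup_{i=1}^n U_i$ coincide.

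The core of the proof is that the set $P$ alone recovers the family $\{V_1,\dots,V_n\}$. Concretely, I would prove that for nonzero $x,y\in P$ one has $x+y\in P\cup\{0\}$ if and only if $x$ and $y$ lie in a common $V_i$. The "if" direction is clear. For the converse, write $x\in V_i$ and $y\in V_j$ (each uniquely) and suppose $i\neq j$; if $x+y$ were $0$, or lay in some $V_l$, then in each case one obtains either a nonzero vector in $V_i\cap V_j$ or a nontrivial relation $v_i+v_j+v_l=0$ with $v_i=x\neq 0$ among three of the spaces, contradicting \Cref{lemma-mds-properties}, which gives that the sum of any three of the $V$'s is direct exactly because $k_L\ge 3$. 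This is precisely the place where the hypothesis $k_L\neq 2$ enters, and I expect it to be the only delicate point of the argument. The same characterization holds verbatim for the $U_i$ and the set $Q$. Since $P=Q$, the relation "$x+y\in P\cup\{0\}$" on $P\setminus\{0\}$ is an equivalence relation whose classes are simultaneously $\{V_i\setminus\{0\}\}_i$ and $\{U_i\setminus\{0\}\}_i$; hence these two partitions coincide. As the $V_i$ are pairwise distinct (pairwise trivial intersection, positive dimension) and likewise the $U_i$, and there are $n$ of each, we obtain a permutation $\pi\in S_n$ with $V_i=U_{\pi(i)}$ for all $i$, that is, $\U\sim\V$. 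By \Cref{thm-isometry-criterium}, $f$ extends to a $K$-monomial map.
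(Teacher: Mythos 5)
Your proof is correct, but it is organized differently from the paper's. The paper argues by contradiction: it imports from \cite{d1} the fact that a nontrivial solution of \cref{eq-main-counting-space} forces, after reindexing, a nontrivial covering $V_1=\bigcup_{i=1}^t V_1\cap U_i$ with $t\geq q+1\geq 3$ proper nonzero parts, and then picks $a\in V_1\cap U_1$, $b\in V_1\cap U_2$ and locates $a+b$ in some $U_3$, contradicting the directness of $U_1+U_2+U_3$ supplied by \Cref{lemma-mds-properties}. You bypass that covering machinery entirely: you first apply \Cref{lemma-mds-properties} to both $C$ and $f(C)$ (correctly noting that $f(C)$ is again MDS with the same parameters) to collapse \cref{eq-main-counting-space} to $\sum_i\id_{V_i}=\sum_i\id_{U_i}$, deduce from the pairwise trivial intersections that $\bigcup_i V_i=\bigcup_i U_i$, and then reconstruct the family $\{V_i\}$ from this union via the criterion ``$x+y\in P\cup\{0\}$ iff $x,y$ share a space,'' whose only nontrivial direction again rests on the directness of sums of three of the spaces --- the same algebraic kernel as the paper's contradiction, and the same (correctly identified) point where $k_L\neq 2$ is needed. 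Your route is more self-contained, since it does not rely on the unexplained covering step cited from \cite{d1}, and it directly exhibits the permutation realizing $\U\sim\V$ rather than merely refuting its absence; the paper's route is shorter on the page because it delegates the reduction to prior work, and its covering formulation is what generalizes to the $k_L=2$ case in \Cref{thm-mds-when2}. All the supporting details in your argument check out: the $V_i$ (and $U_i$) are pairwise distinct $m$-dimensional spaces, $n\geq k_L\geq 3$ guarantees a third index is available, and the relation you define is indeed an equivalence relation with classes $V_i\setminus\{0\}$.
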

\begin{proof}
If $k_L = 1$ the statement of the proposition is obvious.

Let $k_L \geq 3$ and therefore $n \geq 3$.
Let $f: C \rightarrow L^n$ be a $K$-linear isometry.
Assume that $f$ does not extend to a $K$-monomial map. By \Cref{thm-isometry-criterium}, the pair $(\U,\V)$ is a nontrivial solution of \cref{eq-main-counting-space}. Using the same idea as in the proof of \Cref{thm-small-length-mwet} (see \cite{d1}), we can assume, after a proper reindexing, that there exists a nontrivial covering
$V_1 = \bigcup_{i=1}^t V_1 \cap U_i$, where $\{0\} \subset V_1 \cap U_i \subset V_1$, for all $i \in \myset{t}$, $t \geq q+1$. Note that $q = \card{K} \geq 2$ and therefore $t \geq 3$.
 
Let $a, b \in K^k$ be such that $a \in V_1 \cap U_1$, $b \in V_1 \cap U_2$ and $a,b \neq 0$. The map $f$ is an isometry, which implies the code $f(C)$ is MDS.
By \Cref{lemma-mds-properties}, $U_1 \cap U_2 = \{0\}$. We have $a \notin U_2$, $b \notin U_1$, $a+b \notin U_1$ and $a+b \notin U_2$. The element $a + b$ is nonzero since otherwise $a = -b \in U_2$. Also, $a+b \in V_1$ and $a+ b \in U_1+U_2$. In the covering $V_1\cap U_i$, $i \in \myset{t}$ , of $V_1$ there are at least $3$ nonzero spaces, so there exists a space, without loss of generality let it be $U_3$, such that $a+ b \in U_3$. Hence $U_3 \cap (U_1 + U_2) \neq \{0\}$.

From \Cref{lemma-mds-properties}, $\dim_K (U_1 + U_2 + U_3) = m\min\{k_L,3\} = 3m$ and thus $U_1 \cap (U_1 + U_2) = \{0\}$. By the contradiction, $f$ extends to a $K$-monomial map.
\end{proof}

For the case $k_L = 2$, the approach presented in \Cref{thm-mds-not2} fails. But we still can use the same idea to improve the result of \Cref{thm-small-length-mwet}.

Let $V$ be a vector space over $K$ of dimension $m$. \emph{Partition} of $V$ is a collection of proper subspaces of $V$, such that any nonzero vector from $V$ belongs to exactly one subspace from the collection. By $\sigma(m)$ we denote the minimal possible number of subspaces in the partition of $V$. In \cite{project-coverings} there are observed different properties of partitions and, particularly, the properties of the value $\sigma(m)$ for different $m$, and there is also mentioned the general lower bound (with the reference to the result of Beutelspacher \cite{beutelspacher}), $\sigma(m) \geq q^{\lceil \frac{m}{2} \rceil} + 1$.

\begin{proposition}\label{thm-mds-when2}
Let $K \subset L$ be a pair of finite fields. Let $C$ be a $K$-linear MDS code in $L^n$ with $k_L = 2$ and $n \leq q^{\lceil \frac{m}{2} \rceil}$. Each $K$-linear isometry of $C$ extends to a $K$-monomial map.
\end{proposition}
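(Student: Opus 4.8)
The plan is to argue by contradiction via \Cref{thm-isometry-criterium}, following the same scheme as in \Cref{thm-mds-not2} but replacing the ad hoc counting used there with the lower bound on $\sigma(m)$ quoted above. First I would collect what \Cref{lemma-mds-properties} gives when $k_L = 2$: applying it to a one-element index set, for which $\card{I} = 1 < 2 = k_L$, every $V_i$ has $\dim_K V_i = m$ and $V_i \cap V_j = \{0\}$ for $i \neq j$; since a $K$-linear isometry preserves the minimum distance, $f(C)$ is again MDS with the same $k_L = 2$, so the same holds for $\U$, namely $\dim_K U_j = m$ and $U_j \cap U_{j'} = \{0\}$ for $j \neq j'$. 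In particular $\card{V_i} = \card{U_j} = q^m$ for all indices, and $n \geq 2$ automatically (because $k_L = n - d + 1 \leq n$). Note also that the $V_i$ are pairwise distinct, and likewise the $U_j$, since two equal members of a tuple would intersect nontrivially.

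Now assume $f$ does not extend to a $K$-monomial map. By \Cref{thm-isometry-criterium}, $(\U,\V)$ is a nontrivial solution of \cref{eq-main-counting-space}; since every space occurring has cardinality $q^m$, that identity simplifies to $\sum_{i=1}^n \id_{V_i} = \sum_{j=1}^n \id_{U_j}$ as functions on $K^k$. Evaluating at an arbitrary nonzero vector and using that both tuples are partial spreads — so each side takes values in $\{0,1\}$ on nonzero vectors — this is equivalent to the set equality $\bigcup_{i=1}^n V_i = \bigcup_{j=1}^n U_j =: W$. Since $\U \not\sim \V$ and the members of each tuple are distinct, the sets $\{V_1, \dots, V_n\}$ and $\{U_1, \dots, U_n\}$ differ, so after reindexing we may assume $V_1 \neq U_j$ for every $j \in \myset{n}$.

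The decisive step is then to exhibit a partition of $V_1$ with too few parts. Because $V_1 \subseteq W = \bigcup_j U_j$, we have $V_1 = \bigcup_{j=1}^n (V_1 \cap U_j)$. Each $V_1 \cap U_j$ is a \emph{proper} subspace of $V_1$: equality $V_1 \cap U_j = V_1$ would give $V_1 \subseteq U_j$, hence $V_1 = U_j$ by equality of dimensions, contrary to the choice of $V_1$. Moreover the nonzero spaces among the $V_1 \cap U_j$ pairwise meet only in $\{0\}$, since $V_1 \cap U_j \cap U_{j'} \subseteq U_j \cap U_{j'} = \{0\}$ for $j \neq j'$. Hence those nonzero spaces form a partition of the $m$-dimensional $K$-space $V_1$ into at most $n$ proper subspaces, so $\sigma(m) \leq n \leq q^{\lceil m/2 \rceil}$, contradicting $\sigma(m) \geq q^{\lceil m/2 \rceil} + 1$. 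Therefore $f$ extends to a $K$-monomial map. The structure of the argument makes clear that all the content sits in \Cref{lemma-mds-properties} and the Beutelspacher bound on $\sigma(m)$; the only step I expect to need a little care is the reduction of \cref{eq-main-counting-space} to the set equality $\bigcup_i V_i = \bigcup_j U_j$, after which the partition bound finishes everything at once.
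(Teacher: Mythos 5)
Your proof is correct and follows essentially the same route as the paper: both reduce to the observation that the nonzero spaces $V_1 \cap U_j$ partition the $m$-dimensional space $V_1$ into at most $n$ proper subspaces (using \Cref{lemma-mds-properties} applied to $f(C)$ to get $U_j \cap U_{j'} = \{0\}$), contradicting the bound $\sigma(m) \geq q^{\lceil m/2 \rceil} + 1$. The only difference is that you derive the nontrivial covering $V_1 = \bigcup_{j}(V_1 \cap U_j)$ directly from \cref{eq-main-counting-space}, using that all spaces involved have cardinality $q^m$ and form partial spreads, whereas the paper imports that step from the argument of \cite{d1}; your version is more self-contained but not a genuinely different approach.
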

\begin{proof}
Assume that $f: C \rightarrow L^n$ is an unextendible $K$-linear isometry. Therefore the pair $(\U,\V)$ is a nontrivial solution of \cref{eq-main-counting-space}. 
As in the proof of \Cref{thm-mds-not2}, we can assume that the space $V_1$ is covered nontrivially, $V_1 = \bigcup_{i=1}^t V_1 \cap U_i$, where $n \geq t$. Since the code $f(C)$ is also MDS, by \Cref{lemma-mds-properties}, any two different spaces $U_i$ and $U_j$ intersect in zero. Therefore, $V_1 \cap U_i$, for $i \in \myset{t}$, is a partition of $V_1$ and $n \geq t \geq \sigma(m) > q^{\lceil \frac{m}{2} \rceil}$. By contradiction, the statement of the proposition holds.
\end{proof}

\section{Extendibility of additive isometries of linear codes}
Whereas the classical MacWilliams Extension theorem describes linear isometries of linear codes in $L^n$ it says nothing about the extendibility of nonlinear isometries of linear codes, particularly, it gives no information about additive isometries of a linear code.

Let the code $C$ be an $L$-linear code in $L^n$. In this section we study the extendibility of $K$-linear isometries of the code $C$, considered as a $K$-linear code. 

Denote by $k_L$ the dimension of $C$ over $L$. Let $A_L \in \M_{k_L \times n}(L)$ be a generator matrix of a linear code $C$. The rows of the matrix $A_L$ form an $L$-linear basis of $C$.
Let $b_1, \dots, b_m$ be a $K$-linear basis of $L$ over $K$. Denote by $b_i A_L$ the matrix formed from $A_L$ by multiplying each matrix entry by the scalar $b_i$, for all $i \in \myset{m}$. Consider a matrix $A = (b_1 A_L^T | \dots | b_m A_L^T)^T \in \M_{mk_L \times n}(L)$, that is a vertical concatenation of the matrices $b_i A_L$, $i \in \myset{m}$. It is easy to see that the rows of $A$ form a $K$-linear basis of $C$. Therefore $k = \dim_K C$ equals to $k_L m$. Recall $\V = (V_1, \dots, V_n)$ the tuple of spaces of $A$.

\begin{lemma}\label{lemma-linear-dims}
For each $i \in \myset{n}$, $\dim_K V_i = m$ or $\dim_K V_i = 0$. For all $i \neq j$, $V_i$ and $V_j$ either coincide or intersect in zero.
\end{lemma}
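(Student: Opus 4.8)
The plan is to derive both assertions from the dimension formula of \cite{d1} together with the fact that $C$ is $L$-linear. For a subset $I \subseteq \myset{n}$ let $C|_I$ denote the projection of $C$ onto the coordinates indexed by $I$, a $K$-subspace of $L^{\card{I}}$. Applying the dimension formula to the submatrix of $A$ consisting of the columns indexed by $I$ — whose rows span $C|_I$ over $K$ — yields $\dim_K \sum_{i \in I} V_i = \dim_K C|_I$; in particular $\dim_K V_j = \dim_K C|_{\{j\}}$ for every $j$, and $\dim_K(V_i + V_j) = \dim_K C|_{\{i,j\}}$ for $i \neq j$. Since $C$ is $L$-linear, each $C|_I$ is an $L$-subspace of $L^{\card{I}}$. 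This is the same mechanism used in the proof of \Cref{lemma-mds-properties}, only here the relevant coordinate sets are small rather than full rank.

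First I would settle the statement on a single space: as $C$ is $L$-linear, $C|_{\{j\}}$ is an $L$-subspace of $L$, hence equals $\{0\}$ or $L$, so $\dim_K V_j = \dim_K C|_{\{j\}} \in \{0, m\}$.

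For the second statement, fix $i \neq j$ and suppose $V_i \cap V_j \neq \{0\}$; the goal is to show $V_i = V_j$. Then $V_i$ and $V_j$ are both nonzero, so by the first part $\dim_K V_i = \dim_K V_j = m$, which also forces $C|_{\{i\}} = C|_{\{j\}} = L$. On the one hand $\dim_K(V_i + V_j) = 2m - \dim_K(V_i \cap V_j) < 2m$. On the other hand, $C|_{\{i,j\}}$ is an $L$-subspace of $L^2$ that surjects onto the first coordinate, hence has $L$-dimension $1$ or $2$, i.e. $K$-dimension $m$ or $2m$; being strictly smaller than $2m$, it must equal $m$. Thus $\dim_K(V_i + V_j) = m = \dim_K V_i = \dim_K V_j$, and since $V_i \subseteq V_i + V_j$ and $V_j \subseteq V_i + V_j$ with matching dimensions, $V_i = V_i + V_j = V_j$.

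The only point requiring care is the first paragraph: the dimension formula of \cite{d1} is stated there for the tuple of spaces attached to a $K$-basis of a code, whereas the rows of an arbitrary submatrix of $A$ need not be $K$-independent. However, that formula really expresses that the $K$-dimension of the $K$-span of the rows of any matrix over $L$ equals the $K$-dimension of the sum of its column spaces, which holds without an independence hypothesis; alternatively, the two identities $\dim_K V_j = \dim_K C|_{\{j\}}$ and $\dim_K(V_i + V_j) = \dim_K C|_{\{i,j\}}$ can be verified directly by expanding the relevant entries of $A$ in the basis $b_1, \dots, b_m$ and recognizing both sides as the rank over $K$ of the same matrix. Everything else is elementary dimension counting, so I do not anticipate a genuine obstacle.
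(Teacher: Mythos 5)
Your proof is correct and follows essentially the same route as the paper: both arguments reduce to the projections of $C$ onto one and two coordinates via the identity $\dim_K \sum_{i \in I} V_i = \dim_K C|_I$ and exploit that these projections are $L$-subspaces of $L$ and $L^2$, hence of $K$-dimension a multiple of $m$. The only organizational difference is in the second assertion, where the paper splits on whether the two columns of $A_L$ are proportional (proportional gives $V_i = V_j$ directly, otherwise $\dim_K(V_i+V_j)=2m$), while you argue contrapositively and extract $V_i = V_j$ from $\dim_K(V_i+V_j)=m$; your concern about applying the dimension formula to a spanning but not independent set of rows is well taken and correctly resolved, and the paper's own proof implicitly relies on the same extension.
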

\begin{proof}
Prove the first part.
Let $i \in \myset{n}$. The dimension of the space $V_i$ equals to zero if and only if the corresponding $i$th coordinate in the code is zero. Assume that the $i$th column of the generator matrix $A_L$ contains a nonzero element $x \in L\setminus\{0\}$. Then the $i$th column of the generator matrix $A$ contains the elements $b_1 x, \dots, b_m x \in L \setminus \{0\}$, which form a $K$-linear basis of $L$. Hence the column space $V_i$ has dimension $\dim_K V_i \geq m$ and thus $\dim_K V_i = m$.

Prove the second part. Let $i \neq j \in \myset{n}$ be such that $i$th and $j$th columns of $A_L$ are nonzero. If $i$th vector-column can be obtained from $j$th vector column by multiplication by a nonzero scalar, then, because such a multiplication is a $K$-linear automorphism, $V_i = V_j$. Otherwise, the punctured code $C'$ obtained from $C$ by holding only $i$th and $j$th coordinates has dimension $2$ over $L$. Following the idea of \Cref{lemma-mds-properties}, two column spaces of $C'$ are $V_i$ and $V_j$. Considering $C'$ as a $K$-linear code, we have $2m = \dim_K C' = \dim_K (V_i + V_j)$ and therefore $V_i \cap V_j = \{0\}$.
\end{proof}

\begin{proposition}\label{thm-mwet-linear-codes}
Let $K \subset L$ be a pair of finite fields. Let $C$ be an $L$-linear code in $L^n$ with the length $n \leq q^{\lceil \frac{m}{2} \rceil}$ and let $f: C \rightarrow L^n$ be a $K$-linear isometry such that $f(C)$ is an $L$-linear code. The map $f$ extends to a $K$-monomial map.
\end{proposition}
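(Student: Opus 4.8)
The plan is to invoke \Cref{thm-isometry-criterium}: because $f$ is an isometry, the corresponding pair $(\U,\V)$ satisfies \cref{eq-main-counting-space}, and it suffices to prove that this forces $\U\sim\V$. The structural input is \Cref{lemma-linear-dims}. It applies verbatim to $C$, and since $f(C)$ is also $L$-linear it applies to $f(C)$ as well: its conclusions concern only the $K$-dimensions of the $U_i$ and of the pairwise intersections $U_i\cap U_j$, and these are unchanged if the $K$-basis of $f(C)$ that produces $\U$ is replaced by the ``tensored-up'' basis of \Cref{lemma-linear-dims} (the two bases differ by an invertible $k\times k$ matrix over $K$, which acts simultaneously on all column spaces). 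Hence every $V_i$ and every $U_i$ has $K$-dimension $0$ or $m$, and two spaces on the same side either coincide or meet only in $\{0\}$. A first reduction disposes of the zero spaces: evaluating \cref{eq-main-counting-space} at $0\in K^k$ gives $a\bigl(1-q^{-m}\bigr)=b\bigl(1-q^{-m}\bigr)$, where $a$, resp.\ $b$, is the number of zero spaces among the $V_i$, resp.\ among the $U_i$; so $a=b$, and subtracting $a\cdot\id_{\{0\}}$ from both sides I may assume that every $V_i$ and every $U_i$ is $m$-dimensional.

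Collecting coincident spaces, let $W_1,\dots,W_s$ be the distinct spaces among the $V_i$, with multiplicities $\mu_1,\dots,\mu_s$, and $W_1',\dots,W_{s'}'$ those among the $U_i$, with multiplicities $\nu_1,\dots,\nu_{s'}$. Since every cardinality equals $q^m$, \cref{eq-main-counting-space} becomes the integer identity
\begin{equation*}
\sum_{j=1}^{s} \mu_j\,\id_{W_j}\;=\;\sum_{j=1}^{s'}\nu_j\,\id_{W_j'}
\end{equation*}
of functions $K^k\to\mathbb{Z}$. The core step is to show that $\{W_1,\dots,W_s\}=\{W_1',\dots,W_{s'}'\}$. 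Suppose $W_1$ is not among the $W_j'$. Then $W_1\cap W_j'\subsetneq W_1$ for every $j'$, and since the $W_j'$ pairwise intersect in $\{0\}$ so do the subspaces $W_1\cap W_j'$; were they to cover $W_1$ they would be a partition of $W_1$ with at most $s'\le n$ nonzero parts, which is impossible since $\sigma(m)\ge q^{\lceil m/2\rceil}+1>n$. Hence some nonzero $x\in W_1$ lies in no $W_j'$, and by \Cref{lemma-linear-dims} it lies in no $W_j$ with $j\ne1$ either; evaluating the identity at $x$ forces $\mu_1=0$, a contradiction. Thus every $W_j$ occurs among the $W_j'$ and, symmetrically, conversely; so $s=s'$ and, after reindexing, $W_j=W_j'$ for all $j$.

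Finally, the $W_j$ pairwise intersect in $\{0\}$, so any nonzero $x\in W_j$ avoids all other $W_{j'}$; evaluating $\sum_j(\mu_j-\nu_j)\,\id_{W_j}=0$ at such an $x$ gives $\mu_j=\nu_j$. Hence the multiset of nonzero column spaces of $C$ coincides with that of $f(C)$, and since the numbers of zero spaces also coincide there is a permutation $\pi\in S_n$ with $V_i=U_{\pi(i)}$ for every $i$, that is $\U\sim\V$; \Cref{thm-isometry-criterium} then yields that $f$ extends to a $K$-monomial map. I expect the main obstacle to be the covering step in the second paragraph: one has to verify carefully that a column space of $C$ absent from those of $f(C)$ would be exhausted by fewer than $\sigma(m)$ pairwise-trivially-intersecting traces, which is exactly where the length hypothesis $n\le q^{\lceil m/2\rceil}$ is used, just as in \Cref{thm-mds-when2}.
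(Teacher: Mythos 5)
Your proof is correct, and it rests on the same two pillars as the paper's: \Cref{lemma-linear-dims} applied to both $C$ and $f(C)$, and the lower bound $\sigma(m)\ge q^{\lceil m/2\rceil}+1$ applied to a partition of an $m$-dimensional column space by its traces with the $U$-spaces. The difference is one of self-containedness rather than of strategy. The paper's proof imports from \cite{d1} the reduction that a nontrivial solution of \cref{eq-main-counting-space} yields some $V_1$ nontrivially covered as $V_1=\bigcup_i V_1\cap U_i$, and then concludes in one line; you instead derive everything directly from \cref{eq-main-counting-space}, first eliminating the zero spaces by evaluating at $0$, then turning the equation into an integer identity of indicator functions, and finally showing that a $V$-space absent from the $U$-side would either be partitioned by fewer than $\sigma(m)$ traces or contain a point at which the identity forces its multiplicity to vanish. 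This buys a complete argument that does not lean on the unstated covering lemma of \cite{d1}, and it also yields the full multiset equality of column spaces (hence $\U\sim\V$) explicitly rather than by contradiction alone. You are also more careful than the paper on one genuine point: the tuple $\U$ is built from the basis $f(c_1),\dots,f(c_k)$, not from the tensored-up basis used in \Cref{lemma-linear-dims}, and your observation that the two differ by an invertible matrix over $K$ acting simultaneously on all column spaces is exactly what justifies applying that lemma to $f(C)$.
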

\begin{proof}
The proof is almost the same as in \Cref{thm-mds-when2}. Assume that $f: C \rightarrow L^n$ is an unextendible $K$-linear isometry and thus the pair $(\U,\V)$ is a nontrivial solution of \cref{eq-main-counting-space}. 
We can assume that the space $V_1$ is covered nontrivially, $V_1 = \bigcup_{i=1}^t V_1 \cap U_i$, where $n \geq t$. From \Cref{lemma-linear-dims}, since $V_1$ is nonzero space, $\dim_K V_1 = m$. The code $f(C)$ is also $L$-linear, and thus from \Cref{lemma-linear-dims} the spaces $U_1, \dots, U_n$ or coincide or intersect in zero. Using the same arguments as in the proof of \Cref{thm-mds-when2}, $n > q^{\lceil \frac{m}{2} \rceil}$, and therefore, by contradiction, the statement of the proposition is true.
\end{proof}

\footnotesize


\begin{thebibliography}{1}
	
	\bibitem{aug1}
	{\sc S.~V. Avgustinovich and F.~I. Solov'eva}, {\em To the metrical rigidity of
		binary codes}, Probl. Inf. Transm., 39 (2003), pp.~178--183.

	\bibitem{kov1}
	{\sc D.~I. Kovalevskaya}, {\em On metric rigidity for some classes of codes},
	Probl. Inf. Transm., 47 (2011), pp.~15--27.
	
	\bibitem{d1}
	{\sc S.~Dyshko}, {\em On extendibility of additive code isometries}, arXiv:math-co/14061714v2. Available at \url{http://arxiv.org/pdf/1406.1714v2.pdf}
	
	\bibitem{macwilliams}
	{\sc F.~MacWilliams and N.~Sloane}, {\em The Theory of Error-Correcting Codes:
		Vol.: 1}, North-Holland Mathematical Library, North-Holland Publishing
	Company, 1977.
	
	\bibitem{project-coverings}
	{\sc E.~L. Nastase and P.~A. Sissokho}, {\em The minimum size of a finite
		subspace partition}, Linear algebra and its applications, 435 (2011),
	pp.~1213--1221.
	
\bibitem{beutelspacher}
{\sc A.~Beutelspacher}, {\em Blocking sets and partial spreads in finite
	projective spaces}, Geometriae Dedicata, 9 (1980), pp.~425--449.
\end{thebibliography}
\end{document}